\newcommand{\oldbfe}[1]{\begin{bfseries}\emph{#1}\end{bfseries}}
\newcommand{\myra}{\mbox{$\:\rightarrow\:$}}
\newcommand{\fa}{\mbox{$\forall$}}
\newcommand{\LL}{\mbox{$\ldots$}}
\newcommand{\C}[1]{\mbox{$\{{#1}\}$}}           
\newcommand{\NI}{\noindent}
\newcommand{\II}{\vspace{2 mm}}
\newcommand{\szkew}[1]{\relax \setbox0=\hbox{\kern -24pt $\displaystyle#1$\kern 0pt }%
\box0}
{\catcode`\@=11 \global\let\ifjusthvtest@=\iffalse}
\newcounter{oldmycaption}
\newcommand{\eat}[1]{}
\newenvironment{Proof}
      {\medskip\noindent{\bf Proof.}}
      {\hfill$\Box$\medskip}
\newenvironment{proofof}[1]
      {\medskip\noindent{\bf Proof of #1.}}
      {\hfill$\Box$\medskip}
\def\smallromani{\renewcommand{\theenumi}{\roman{enumi}}
\renewcommand{\labelenumi}{(\theenumi)}}
\begin{document}

\author{Krzysztof Apt\inst{1,2} \and Vincent Conitzer\inst{3} \and Mingyu Guo\inst{3} \and Evangelos Markakis\inst{1}}

\institute{Centre for Math and Computer Science (CWI), Amsterdam, The Netherlands  \newline \email{\{apt, vangelis\}@cwi.nl} \and University of Amsterdam, Institute of Language, Logic and Computation, Amsterdam, The Netherlands \and Duke University, Department of Computer Science, Durham, NC, USA \newline 
\email{\{conitzer, mingyu\}@cs.duke.edu}
}

\title{Welfare Undominated Groves Mechanisms
}

\date{}
\maketitle
\begin{abstract}
A common objective in mechanism design is to choose the outcome
(for example, allocation of resources) that maximizes the sum of the
agents' valuations, without introducing incentives for agents to misreport
their preferences.  The class of Groves mechanisms achieves this; however,
these mechanisms require the agents to make payments, thereby reducing
the agents' total welfare. 

In this paper we introduce a measure for comparing two mechanisms with
respect to the final welfare they generate. This measure induces a partial
order on mechanisms and we study the question of finding minimal elements
with respect to this partial order. In particular, we say a non-deficit
Groves mechanism is {\em welfare undominated} if there exists no other
non-deficit Groves mechanism that always has a smaller or equal sum of
payments.  
We focus on two domains: (i) auctions with multiple identical
units and unit-demand bidders, and (ii) mechanisms for public project
problems.
In the first domain we analytically characterize all welfare undominated Groves
mechanisms that are anonymous and have linear payment functions, by showing
that the family of optimal-in-expectation linear redistribution mechanisms,
which were introduced in~\cite{GC08b} and include the Bailey-Cavallo
mechanism~\cite{Bailey97:Demand,Cavallo06:Optimal}, coincides with the family
of welfare undominated Groves mechanisms that are anonymous and linear in the setting we study. 
In the second domain we show that the classic VCG (Clarke) mechanism is welfare
undominated for the class of public project problems with equal participation
costs, but is not 
undominated for a more general class.  \end{abstract}




\section{Introduction}
\label{sec:intro}


Mechanism design is often employed for coordinating group decision making among
agents.  Often, such mechanisms impose taxes that agents have to pay to a
central authority. Although maximizing tax revenue is a desirable objective in
many settings (for example, if the mechanism is an auction designed by the
seller), it is not desirable in situations where no entity is profiting from
the taxes. Some examples include public project problems as well as certain resource allocation problems without a
seller (e.g., the right to use a shared good on a given time slot, or the exchange
of take-off slots among airline companies). In such
cases, we would like to have mechanisms that minimize the sum of the taxes (or,
even better, achieve budget balance, that is, the sum of the taxes is zero),
while maintaining other desirable properties, such as efficiency,
strategy-proofness and non-deficit (i.e., the mechanism does not need to be
funded by an external source).

The well-known VCG mechanism~\footnote{In this paper, ``the VCG mechanism''
  refers to the Clarke mechanism (aka pivotal mechanism), not to any other Groves mechanism.}  is
efficient, strategy-proof and incurs no deficit. More generally, the family
of Groves mechanisms, which includes VCG, is a family of efficient and
strategy-proof mechanisms. Unfortunately though, Groves mechanisms are not
budget balanced.  In fact, in sufficiently general settings, it is
impossible to have a mechanism that satisfies efficiency,
strategy-proofness, and budget balance~\cite{Green77:Characterization}.


We therefore consider the following problem: within the family of Groves
mechanisms, we want to identify non-deficit mechanisms that are optimal with
respect to the sum of the payments, i.e., we cannot lower the mechanism's
payments without violating efficiency, strategy-proofness or the non-deficit
property. Such a mechanism, in a sense, maximizes the agents' welfare
(among efficient mechanisms\footnote{By sacrificing efficiency, it is
  sometimes possible to drastically lower the payments, so that the net
  effect is an increase in the agents' welfare~\cite{Guo08:Better,Faltings05:Budget}. However, most of the prior work
  has focused on the case where efficiency is a hard constraint, and we
  will do so in this paper.}). To make this precise, we first introduce a
measure for comparing two feasible mechanisms
(mechanisms that are efficient, strategy-proof and satisfy the non-deficit property). 
We say that a feasible
Groves mechanism $M$ {\em welfare dominates} another feasible Groves
mechanism $M'$ if for every type vector of the agents, the sum of the
payments under $M$ is no more than the sum of the payments under $M'$, and
this holds with strict inequality for at least one type vector. This
definition induces a partial order on feasible Groves mechanisms and we
wish to identify minimal elements in this partial order. We call such
minimal elements {\em welfare undominated}.  Other partial orders, as well
as other notions of optimality, have recently been considered in other work
on redistribution mechanisms (see
Section~\ref{subsec:relwork}). The notion of optimality that we study here
is different from the previously studied ones at both a conceptual and
a technical level, as we illustrate below.


We study the question of finding welfare undominated mechanisms in two
domains.  The first is auctions of multiple identical units with
unit-demand bidders. In this setting, it is easy to see that VCG is welfare
dominated by other Groves mechanisms, such as the Bailey-Cavallo
mechanism~\cite{Bailey97:Demand,Cavallo06:Optimal}. We obtain a complete
characterization of linear and anonymous redistribution mechanisms that are
minimal elements in this partial order: we show that a linear, anonymous
Groves mechanism is welfare undominated if and only if it belongs to the
class of {\em Optimal-in-Expectation Linear (OEL) redistribution
  mechanisms}, which include the Bailey-Cavallo mechanism and were
introduced in~\cite{GC08b}. 
The second domain is public project problems, where a set of agents must
decide on financing a project (e.g., building a bridge). Here, we show
that in the case where the agents have identical participation costs, no
mechanism welfare dominates the VCG mechanism. On the other hand, when the
participation costs can be different across agents, there exist
mechanisms that welfare dominate VCG. In both domains, our proofs rely on
some general properties we establish for anonymous mechanisms, which may be
of independent interest (see Section~\ref{sec:anonymous}). 

The omitted proofs appear in the full version of the paper. 

\subsection{Related Work}
\label{subsec:relwork}

Recently, there has been a series of works on redistribution mechanisms,
which are Groves mechanisms that redistribute some of the VCG payment back
to the bidders. Bailey and Cavallo~\cite{Bailey97:Demand,Cavallo06:Optimal}
introduced a mechanism that welfare dominates VCG in some cases, such as
single-item auctions, but coincides with VCG in some more general settings.
We will refer to this mechanism as the BC mechanism from now on (in fact,
Bailey's mechanism is not always the same as Cavallo's mechanism, but it is
in the settings in which we study it).  A special case of the BC mechanism
was independently discovered by Porter {\em et al.}~\cite{Porter04:Fair}.
Cavallo also proved that the BC mechanism is optimal among the family of
{\em surplus-anonymous} mechanisms; however, this is a quite restrictive
class of mechanisms.
Guo and Conitzer~\cite{GC08j} solved for a worst-case optimal
redistribution mechanism for multi-unit auctions with nonincreasing
marginal values.  Moulin~\cite{Moulin07:Efficient} independently derived
the same mechanism under a slightly different worst-case optimality notion
(in the more restrictive setting of multi-unit auctions with unit demand
only).  These worst-case notions are different notions of optimality than
the one we consider in this paper. Guo and Conitzer~\cite{GC08b} also solve
for mechanisms that maximize expected redistribution (in a certain class of
mechanisms), when a prior is available.  Another notion of optimality,
which is closer to the one studied in this paper, was introduced
in~\cite{GC08a}, namely the notion of {\em undominated} mechanisms. A
mechanism is undominated if there is no other mechanism under which every
{\em individual} agent pays weakly less for every type vector, and strictly
less in at least one case.  This is a weaker concept than ours, in the sense
that for a mechanism that is undominated, there may still exist mechanisms
that welfare dominate it (by increasing the payment from some agents to
decrease the payments from other agents more). In the other direction, if a
mechanism is welfare undominated, then it is also undominated. We believe
that the notion we study in this paper is more appropriate when one is
interested in the final welfare of the agents.  Technically, welfare
undominance appears much more challenging and seems to require different
techniques.

\section{Preliminaries}\label{sec:prelim}

\subsection{Tax-based mechanisms}

We first briefly review tax-based mechanisms (see, e.g.,~\cite{MWG95}).
Assume that there is a set of possible outcomes or \oldbfe{decisions} $D$,
a set $\{1, \LL, n\}$ of players where $n \geq 2$, and for each player $i$
a set of \oldbfe{types} $\Theta_i$ and an (\oldbfe{initial})
\oldbfe{utility function} $ v_i : D \times \Theta_i \myra \mathbb{R}$. Let
$\Theta := \Theta_1 \times \cdots \times \Theta_n$.

In a (direct revelation) mechanism, each player reports a type $\theta_i$ and based on this, the mechanism selects an outcome and a payment to be made by every agent. Hence a mechanism is given by a pair of functions $(f,t)$, where $f$ is the decision function and $t = (t_1,...,t_n)$ is the tax function that determines the players' payments, i.e., $f: \Theta \myra D$, and $t: \Theta \myra \mathbb{R}^n$.

\eat{ 
A \oldbfe{decision rule} is a function $f: \Theta \myra D$, where
$\Theta := \Theta_1 \times \cdots \times \Theta_n$.  We call the tuple
\[
(D, \Theta_1, \LL, \Theta_n, v_1, \LL, v_n, f)
\]
a \oldbfe{decision problem}.

Given a decision problem we consider the following sequence of events:


\begin{itemize}

\item each player $i$ receives (becomes aware of) his type $\theta_i \in
  \Theta_i$,

\item each player $i$ announces a type $\theta'_i  \in \Theta_i$;
this yields a  type vector $\theta' := (\theta'_1, \LL, \theta'_n)$,
\label{item:2}

\item A decision $d := f(\theta')$ is taken and
it is communicated to each player,
\label{item:3}

\item the resulting utility for player $i$ is then
$v_i(d, \theta_i)$.
\end{itemize}

Each tax-based mechanism is constructed by combining
decision rules with transfer payments (taxes).  It is obtained by 
modifying a decision problem  
$(D, \Theta_1, \LL, \Theta_n, v_1,
\LL, v_n, f)$ to the following one:

\begin{itemize}

\item the set of decisions is
$
D \times \mathbb{R}^n,
$

\item the decision rule is a function
$
(f,t): \Theta \myra D \times \mathbb{R}^n,
$
where $
t: \Theta \myra \mathbb{R}^n
$
and
$
(f,t)(\theta) := (f(\theta), t(\theta)),
$

\item each (\oldbfe{final}) \oldbfe{utility function} for player $i$ is a function $u_i: D \times \mathbb{R}^n \times \Theta_i \myra \mathbb{R}$
defined by
$
u_i(d,t_1, \LL, t_n, \theta_i) :=  v_i(d, \theta_i) + t_i.
$  (That is, utilities are {\em quasilinear}.)

\end{itemize}
} 

We assume that the (\oldbfe{final}) \oldbfe{utility function} for player $i$ is a function $u_i: D \times \mathbb{R}^n \times \Theta_i \myra \mathbb{R}$
defined by
$
u_i(d,t_1, \LL, t_n, \theta_i) :=  v_i(d, \theta_i) + t_i
$  (that is, utilities are {\em quasilinear}).
For each vector $\theta$ of announced types, if
$t_i(\theta) \geq 0$, player $i$ \oldbfe{receives} $t_i(\theta)$, and
if $t_i(\theta) < 0$, he \oldbfe{pays} $|t_i(\theta)|$.
Thus when the true type of player $i$ is $\theta_i$ and his
announced type is $\theta'_i$, his final utility is 
\[
u_i((f,t)(\theta'_i,
\theta_{-i}), \theta_i) = v_i(f(\theta'_i, \theta_{-i}), \theta_i) +
t_i(\theta'_i, \theta_{-i}),
\] 
where $\theta_{-i}$ are the types
announced by the other players.

\eat{


}
\subsection{Properties of tax-based mechanisms}

We say that a tax-based mechanism $(f,t)$ is
  \begin{enumerate}
  \item[$\bullet$] \oldbfe{efficient} if for all $\theta \in
\Theta$ and $d' \in D$, 
$\sum_{i = 1}^{n} v_i(f(\theta), \theta_i) \geq \sum_{i = 1}^{n} v_i(d', \theta_i)$, 
  \item[$\bullet$] \oldbfe{budget-balanced} if $\sum_{i = 1}^{n} t_i(\theta) = 0$ for all $\theta\in\Theta$,

  \item[$\bullet$] \oldbfe{feasible} if $\sum_{i = 1}^{n} t_i(\theta) \leq 0$ for all $\theta$, i.e., the mechanism does not need to be funded by an external source,

  \item[$\bullet$] \oldbfe{pay-only} if $t_i(\theta) \leq 0$ for all $\theta$ and all $i \in \{1, \LL, n\}$,
  
  \item[$\bullet$] \oldbfe{strategy-proof} 
if for all $\theta$, $i \in \C{1,\LL,n}$ and
$\theta'_i$,
\[
u_i((f,t)(\theta_i, \theta_{-i}), \theta_i) \geq
u_i((f,t)(\theta'_i, \theta_{-i}), \theta_i).
\]
  \end{enumerate}

  Tax-based mechanisms 
  can be compared in terms of the final social welfare they generate
  ($\sum_{i = 1}^{n} u_i((f,t)(\theta), \theta_i)$). More precisely, one
  can define the following two natural partial orders as a way to compare
  mechanisms. The first was introduced in \cite{GC08a}. The second
  is the concept that we introduce and study in this paper, which we believe
  is a more appropriate concept when one is interested in the final
  social welfare of the agents.

\begin{definition}
Given two tax-based mechanisms $(f,t)$ and $(f',t')$ we say
that $(f',t')$ \oldbfe{dominates} $(f,t)$ (due to \cite{GC08a})
if 
\begin{itemize}
\item[$\bullet$] for all $\theta \in \Theta$ and all $i \in \{1, \LL, n\}$,
$u_i((f,t)(\theta), \theta_i) \leq  u_i((f',t')(\theta), \theta_i),
$
\item[$\bullet$] for some $\theta \in \Theta$ and some $i \in \{1, \LL, n\}$,
$u_i((f,t)(\theta), \theta_i) <  u_i((f',t')(\theta), \theta_i).$
\end{itemize} 
\end{definition}

\begin{definition}
Given two tax-based mechanisms $(f,t)$ and $(f',t')$ we say
that $(f',t')$ \oldbfe{welfare dominates} $(f,t)$ 
if
\begin{itemize}
\item[$\bullet$] for all $\theta \in \Theta$,
$
\sum_{i = 1}^{n} u_i((f,t)(\theta), \theta_i) \leq \sum_{i = 1}^{n} u_i((f',t')(\theta), \theta_i),
$

\item[$\bullet$] for some $\theta \in \Theta$,
$
\sum_{i = 1}^{n} u_i((f,t)(\theta), \theta_i) < \sum_{i = 1}^{n} u_i((f',t')(\theta), \theta_i).
$
\end{itemize}
\end{definition}

In this paper, we are interested only in Groves mechanisms, so that the
decision function $f$ is always efficient, and (welfare) dominance
is strictly due to differences in the tax function $t$.  Specifically, in
this context we have that
$(f,t')$ dominates $(f,t)$ (or simply $t'$ dominates $t$)
if and only if

\begin{itemize}
\item[$\bullet$] for all $\theta \in \Theta$ and all $i \in \{1, \LL, n\}$,
$t_i(\theta) \leq  t'_i(\theta)$, and
\item[$\bullet$] for some $\theta \in \Theta$  and some $i \in \{1, \LL, n\}$,
$t_i(\theta) < t'_i(\theta)$,
\end{itemize}
and $t'$ welfare dominates $t$ if 

\begin{itemize}
\item[$\bullet$] for all $\theta \in \Theta$, $\sum_{i =
  1}^{n} t_i(\theta) \leq \sum_{i = 1}^{n} t'_i(\theta)$, and

\item[$\bullet$] for some $\theta \in \Theta$, $\sum_{i =
  1}^{n} t_i(\theta) < \sum_{i = 1}^{n} t'_i(\theta)$.

\end{itemize}

For two tax-based mechanisms $t, t'$, it is clear that if $t'$ dominates $t$, then it also welfare dominates $t$. 
The reverse implication, however, does not need to hold.\footnote{In
  Appendix~\ref{app:distinct}, we provide an example of two tax-based
  mechanisms that illustrates this.}
  
We now define a transformation on tax-based mechanisms originating from the
same decision function. This transformation was originally defined in
\cite{Bailey97:Demand} and \cite{Cavallo06:Optimal} for the specific case
of the VCG mechanism and in \cite{GC08a} for feasible Groves mechanisms.
We call it the \oldbfe{BCGC transformation} after the authors of these
papers.

Consider a tax-based mechanism $(f,t)$. Given $\theta =
(\theta_1,...,\theta_n)$, let $T(\theta)$ be the total amount of taxes,
i.e., $T(\theta) := \sum_{i=1}^{n} t_i(\theta)$.  For each $i \in
\{1,\LL,n\}$ let\footnote{To ensure that the maximum actually exists we
  assume that each tax function $t_i$ is continuous and each set of types
  $\theta_i$ is a compact subset of some $\mathbb{R}^k$.}
\[
S^{BCGC}_{i}(\theta_{-i}) := \max_{\theta_i' \in \Theta_i} T(\theta'_i, \theta_{-i}).
\]
We then define the tax-based mechanism
$t^{BCGC}$  as follows:
\[
t_i^{BCGC}(\theta) := t_i(\theta) - \frac{S^{BCGC}_i(\theta_{-i})}{n}.
\]

The following observations generalize some of the results of
\cite{Bailey97:Demand,Cavallo06:Optimal,GC08a}.

\begin{note} \label{not:bc}
\begin{enumerate} \smallromani
\item Each tax-based mechanism of the form $t^{BCGC}$ is feasible.  

\item If $t$ is feasible, then either $t$ and $t^{BCGC}$ coincide or  $t^{BCGC}$ dominates $t$.
\end{enumerate}
\end{note}

\subsection{Groves mechanisms}

Each \oldbfe{Groves mechanism} is a tax-based mechanism $(f,t)$ such that the following hold\footnote{Here and below $\sum_{j\not=i}$ is a shorthand for the
  summation over all $j \in \{1,\LL,n\}, \ j \not=i$.}:

\begin{itemize}

\item[$\bullet$] $f(\theta)\in \arg\max_d \sum_{i=1}^n v_i(d,\theta_i)$, 
i.e., the chosen outcome maximizes the initial social welfare.

\item[$\bullet$] $t_i : \Theta \myra \mathbb{R}$ is defined by $t_i(\theta) := g_i(\theta) + h_i(\theta_{-i})$,

\item[$\bullet$] $g_i(\theta) := \sum_{j \neq i} v_j(f(\theta), \theta_j)$,
  
\item[$\bullet$] $h_i: \Theta_{-i} \myra \mathbb{R}$ is an arbitrary function.

\end{itemize}




Intuitively, $g_i(\theta)$ represents the (initial) social welfare from the
decision $f(\theta)$, when player $i$'s (initial) utility is not counted.  
We now recall the following result (e.g.,~\cite{MWG95}):
\II

\NI \textbf{Groves Theorem} Every Groves mechanism $(f,t)$, is efficient and strategy-proof.  \II

For several decision problems the only efficient and strategy-proof tax-based
mechanisms are Groves mechanisms. By a general result of \cite{Hol79}
this is the case for both domains that we consider in this
paper and explains our focus on Groves mechanisms.

A feasible Groves mechanism is \oldbfe{undominated} if there is no
other feasible Groves mechanism that dominates it~\cite{GC08a}. A
feasible Groves mechanism is \oldbfe{welfare undominated} if there is
no other feasible Groves mechanism that welfare dominates it.  Welfare
undominance is a strictly stronger concept than undominance,
as is
illustrated in Appendix~\ref{app:distinct}.  

A special Groves mechanism---the \oldbfe{VCG} or \oldbfe{Clarke}
mechanism---is obtained using\footnote{Here and below, to ensure that the
  considered maximum exist, we assume that $f$ and each $v_i$ are continuous
  functions and $D$ and each $\theta_i$ are compact subsets of some
  $\mathbb{R}^k$.} 
  \[ h_i(\theta_{-i}) := - \max_{d \in D} \sum_{j \neq i}
v_j(d,\theta_j).\]
 In this case,
\[
t_i(\theta)  := \sum_{j \neq i} v_j(f(\theta), \theta_j) - \max_{d \in D} \sum_{j \neq i} v_j(d, \theta_j),
\]
which shows that the VCG mechanism is pay-only.

Following \cite{Cavallo06:Optimal}, let us now consider the mechanism that
results from applying the BCGC transformation to the VCG mechanism. We
refer to this as the Bailey-Cavallo mechanism or simply the BC mechanism. Let $\theta' :=
(\theta_1,...,\theta_{i-1}, \theta_i', \theta_{i+1},...,\theta_n)$, so
$\theta_j' = \theta_j$ for $j\neq i$ and the $i$th player's type in the
type vector $\theta'$ is $\theta'_i$.
Then
\[
S^{BCGC}_i(\theta_{-i}) = \max_{\theta_i' \in \Theta_i} \sum_{k=1}^n \left[    \sum_{j \neq k} v_j(f(\theta'), \theta'_j) - \max_{d \in D} \sum_{j \neq k} v_j(d, \theta'_j)\right],
\]
that is,
\begin{equation}
 \label{eq:BC}
S^{BCGC}_i(\theta_{-i}) = \max_{\theta_i'\in\Theta_i} \left[ (n-1) \sum_{k=1}^n  v_k(f(\theta'), \theta'_k) - \sum_{k=1}^n \max_{d \in D} \sum_{j \neq k} v_j(d, \theta'_j)\right].
\end{equation}
In many settings, we have that for all $\theta$ and for all $i$,
$S^{BCGC}_i(\theta_{-i}) = 0$,
and consequently the VCG and BC mechanisms coincide. Whenever they do not,
by Note \ref{not:bc}$(ii)$ BC dominates VCG.  This is the case for the
single-item auction, as it can be seen that there $S^{BCGC}_i(\theta_{-i}) =
-[{\theta_{-i}]}_2$, where $[\theta_{-i}]_2$ is the
second-highest bid among bids other than player $i$'s own bid.

\section{Anonymous Groves mechanisms}
\label{sec:anonymous}
Throughout this paper, we will be interested in a special class
of Groves mechanisms, namely,
anonymous Groves mechanisms.
We provide here some results about this class that we will utilize in later
sections.  We call a function $f: A^n \myra B$ \oldbfe{permutation
  independent} if for all permutations $\pi$ of $\{1, \LL, n\}$, $f = f
\circ \pi$.
Following \cite{Mou88} we call a Groves mechanism (determined by the vector of functions
$(h_1, \LL, h_n)$) \oldbfe{anonymous} if
\begin{itemize}
\item all type sets $\Theta_i$ are equal,

\item all functions $h_i$ coincide and each of them is permutation independent.
\end{itemize}
Hence, an anonymous Groves mechanism is uniquely determined by a single
function $h : \Theta^{n-1} \rightarrow \mathbb R$.

In general, the VCG mechanism is not anonymous. But it is anonymous when
all the type sets are equal and all the initial utility functions $v_i$
coincide. This is the case in both of the domains that we consider in this
paper.

For any $\theta \in \Theta$ and any permutation $\pi$ of $\{1, \LL, n\}$ we define $\theta^{\pi} \in \Theta$ by letting
\[
\theta^{\pi}_i := \theta_{\pi^{-1}(i)}.
\]

Denote by $\Pi(k)$ the set of all permutations of the set $\{1, \LL, k\}$.
Given a Groves mechanism $h:= (h_1, \LL, h_n)$ for which the type set $\Theta_i$ is the same
for every player (and equal to, say, $\Theta_0$)
we construct now a function $h': \Theta^{n-1}_0 \myra \mathbb R$ by putting
\[
h'(x) := \frac{\sum_{\pi \in \Pi(n-1)} \sum_{j = 1}^{n} h_j(x^{\pi})}{n!},
\]
where $x^{\pi}$ is defined analogously to $\theta^{\pi}$.

Note that $h'$ is permutation independent, so $h'$ is an anonymous Groves mechanism.

The following lemma shows that some of the properties of $h$ transfer to $h'$.

\begin{lemma} 
\label{lem:anon}
Consider a Groves mechanism $h$ and the corresponding anonymous Groves mechanism $h'$. Let $G(\theta) := \sum_{j = 1}^{n}  v_j(f(\theta), \theta_j)$. Suppose that for all permutations $\pi$ of $\{1, \LL, n\}$,
$G(\theta) = G(\theta^{\pi})$. Then:
  \begin{enumerate} \smallromani
  \item  If $h$ is feasible, so is $h'$.

  \item   If an anonymous Groves mechanism $h^0$ is welfare dominated by $h$, 
then it is welfare dominated by $h'$.
  \end{enumerate}
\end{lemma}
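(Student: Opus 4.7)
The plan is to reduce both parts of the lemma to a single averaging identity. First I would rewrite the total payment of any Groves mechanism $(h_1, \ldots, h_n)$ as $\sum_{i} t_i(\theta) = (n-1)G(\theta) + F(h, \theta)$, where $F(h, \theta) := \sum_{i=1}^n h_i(\theta_{-i})$; this follows from $g_i(\theta) = G(\theta) - v_i(f(\theta), \theta_i)$ and summing over $i$. Consequently, feasibility of $h$ becomes the statement $F(h,\theta) \leq -(n-1)G(\theta)$ for every $\theta$, and welfare dominance of $h^0$ by $h$ becomes $F(h,\theta) \leq F(h^0,\theta)$ for all $\theta$ with strict inequality at some $\theta^*$.

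The combinatorial heart of the proof is the identity
\[
\sum_{i=1}^n h'(\theta_{-i}) \;=\; \frac{1}{n!} \sum_{\sigma \in \Pi(n)} F(h, \theta^\sigma).
\]
I would establish this by a term-by-term bijection. Expand the left-hand side by the definition of $h'$ as $\tfrac{1}{n!}$ times a triple sum over $(i, \pi, j) \in \{1,\ldots,n\} \times \Pi(n-1) \times \{1,\ldots,n\}$ of $h_j((\theta_{-i})^\pi)$, and expand the right-hand side as $\tfrac{1}{n!}$ times a double sum over $(\sigma, k) \in \Pi(n) \times \{1,\ldots,n\}$ of $h_k((\theta^\sigma)_{-k})$. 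Both sums have $n \cdot n!$ terms. Setting $k = j$ and associating to each $(i, \pi, j)$ the unique $\sigma \in \Pi(n)$ with $\sigma(i) = j$ whose restriction to $\{1,\ldots,n\}\setminus\{i\}$ realizes $\pi$ matches the summands pairwise.

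Part (i) then follows at once: since $G(\theta^\sigma) = G(\theta)$ by hypothesis, feasibility of $h$ applied at $\theta^\sigma$ gives $F(h,\theta^\sigma) \leq -(n-1)G(\theta)$, and averaging over $\sigma$ via the identity yields $\sum_i h'(\theta_{-i}) \leq -(n-1)G(\theta)$, which is feasibility of $h'$. For part (ii), I first note that because $h^0$ is anonymous (hence symmetric in its argument), $h^0((\theta^\sigma)_{-i}) = h^0(\theta_{-\sigma^{-1}(i)})$, so $F(h^0, \theta^\sigma) = F(h^0, \theta)$ for every $\sigma$. Applying welfare dominance of $h^0$ by $h$ at $\theta^\sigma$ yields $F(h, \theta^\sigma) \leq F(h^0, \theta)$ for every $\sigma$, and averaging via the identity gives $\sum_i h'(\theta_{-i}) \leq F(h^0, \theta)$. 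For the strict inequality, pick $\theta^*$ where $F(h, \theta^*) < F(h^0, \theta^*)$; the $\sigma = \mathrm{id}$ contribution to the average is then strictly less than $F(h^0, \theta^*)$ while every other $\sigma$ contributes at most $F(h^0, \theta^*)$, so the average is strictly less.

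The main obstacle is the bookkeeping in the bijection underlying the identity in step two; once that identity is in hand, the rest is just symmetrization, together with the observation that anonymity of $h^0$ freezes $F(h^0, \cdot)$ on each orbit of $\Pi(n)$.
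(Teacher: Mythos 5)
Your proposal follows essentially the same route as the paper: rewrite the total tax as $(n-1)G(\theta)$ plus the sum of the $h_i(\theta_{-i})$, establish the averaging identity $\sum_{i} h'(\theta_{-i}) = \frac{1}{n!}\sum_{\sigma\in\Pi(n)} F(h,\theta^{\sigma})$ by the counting/bijection argument, and then use the permutation invariance of $G$ (for both parts) and the anonymity of $h^0$ (to freeze $F(h^0,\cdot)$ on each $\Pi(n)$-orbit, for part (ii)). The identity, the feasibility argument, and the strictness-via-the-identity-permutation step are all correct and match the paper's proof.

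The one thing to fix is the direction of the inequality encoding welfare dominance. Under the paper's convention ($t_i$ is what player $i$ receives, and the dominating mechanism has the \emph{larger} total tax), ``$h^0$ is welfare dominated by $h$'' translates to $F(h^0,\theta) \leq F(h,\theta)$ for all $\theta$, strict somewhere --- the reverse of what you wrote. You carry the flipped inequality consistently through part (ii), so as literally written you conclude $F(h',\theta) \leq F(h^0,\theta)$, which says $h^0$ welfare dominates $h'$ rather than the required conclusion. Reversing all the dominance inequalities (your feasibility inequalities are already in the correct direction) repairs the argument with no other changes.
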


The assumption in
Lemma~\ref{lem:anon} of permutation independence of $G(\cdot)$ is satisfied in
both of the domains that we consider in this paper.
Basically, Lemma~\ref{lem:anon} says that if a Groves mechanism is not welfare
undominated, then it must be welfare dominated by an anonymous Groves
mechanism. 

\section{Multi-unit auctions with unit demand} \label{sec:multi-unit}
In this section, we consider auctions where there are multiple identical
units of a single good and all players have unit demand, i.e., each player
wants only one unit.  (When there is only one unit, we have a standard
single-item auction.) For this setting, we obtain an analytical
characterization of all welfare undominated Groves mechanisms that are
anonymous and have linear payment functions, by proving that the
optimal-in-expectation linear redistribution mechanisms (OEL
mechanisms)~\cite{GC08b}, which include the BC mechanism, are the only
welfare undominated Groves mechanisms that are anonymous and linear.  We
also show that undominance and welfare undominance are equivalent if we
restrict our consideration to Groves mechanisms that are anonymous and
linear in the setting of multi-unit auctions with unit demand.

\subsection{Optimal-in-expectation linear redistribution mechanisms}

The optimal-in-expectation linear redistribution mechanisms are special cases
of Groves mechanisms that are anonymous and linear. The OEL mechanisms are
defined only for {\em multi-unit auctions with unit demand}, in which
there are $m$ indistinguishable units for sale, and
no bidder is interested in obtaining more than one unit. For player $i$, her type $\theta_i$ is
her valuation for winning one unit. We assume all bids (announced types) are
bounded below by $L$ and above by $U$, i.e., $\Theta_i = [L,U]$.  ($L$ can be $0$.)

The tax function $t$ of an anonymous linear Groves mechanism is defined as
$t_i(\theta)=t_i^{VCG}(\theta)+r(\theta_{-i})$ for all $i$ and $\theta$.  Here
$t^{VCG}$ is (the tax function of) the VCG mechanism, and $r$ is a linear
function defined as
$r(\theta_{-i})=c_0+\sum\limits_{j=1}^{n-1}c_j[\theta_{-i}]_j$ (where
$[\theta_{-i}]_j$ is the $j$th highest bid among $\theta_{-i}$).  For OEL, the
$c_j$'s are chosen according to one of the following options (indexed by $k$,
$k$ is from $0$ to $n$, and $k-m$ is odd):\\

%
%

$\mathbf{k=0}$: 

$c_i=(-1)^{m-i}{n-i-1 \choose n-m-1}/{m-1 \choose i-1}$ for $i=1,\ldots,m$, 

$c_0=Um/n-U\sum_{i=1}^m(-1)^{m-i}{n-i-1 \choose n-m-1}/{m-1\choose i-1}$, and $c_i=0$ for other $i$.

$\mathbf{k=1,2,\ldots,m}$:

        $c_i=(-1)^{m-i}{n-i-1 \choose n-m-1}/{m-1 \choose i-1}$ for $i=k+1,\ldots,m$,

  $c_k=m/n-\sum_{i=k+1}^m(-1)^{m-i}{n-i-1 \choose n-m-1}/{m-1\choose i-1}$, and $c_i=0$ for other $i$.

$\mathbf{k=m+1,m+2,\ldots,n-1}$: 

$c_i=(-1)^{m-i-1}{i-1\choose
            m-1}/{n-m-1\choose n-i-1}$ for $i=m+1,\ldots,k-1$,

          $c_k=m/n-\sum_{i=m+1}^{k-1}(-1)^{m-i-1}{i-1\choose
            m-1}/{n-m-1\choose n-i-1}$, and $c_i=0$ for other $i$.

$\mathbf{k=n}$: 

$c_i=(-1)^{m-i-1}{i-1\choose
            m-1}/{n-m-1\choose n-i-1}$ for $i=m+1,\ldots,n-1$,

          $c_0=Lm/n-
                  L\sum_{i=m+1}^{n-1}(-1)^{m-i-1}{i-1\choose
            m-1}/{n-m-1\choose n-i-1}$, and $c_i=0$ for other $i$.\\

%

For example, when $k=m+1$, we have $c_{m+1}=m/n$ and $c_i=0$ for all other $i$.  For
this specific OEL mechanism,
$t_i^{OEL}(\theta)=t_i^{VCG}(\theta)+\frac{m}{n}[\theta_{-i}]_{m+1}$.  That is,
besides paying the VCG payment, every player receives an amount that is equal to
$m/n$ times the $(m+1)$th highest bid from the other players.  Actually, this
is the BC mechanism for this setting.

One property of the OEL mechanisms is that the sum of the taxes
$\sum_{i=1}^nt^{OEL}_i(\theta)$ is always less than or equal to $0$ and it
equals $0$ whenever 

$\bullet$ $[\theta]_1=U$, if $k=0$.

        $\bullet$ $[\theta]_{k+1}=[\theta]_k$, if $k \in \{1, \LL,  n-1\}$.

        $\bullet$ $[\theta]_n=L$, if $k=n$.

Using this property, we will prove that the OEL mechanisms are the only welfare
undominated Groves mechanisms that are anonymous and linear.

\subsection{Characterization of welfare undominated Groves mechanisms that are anonymous and linear}


We first show that the OEL mechanisms are welfare undominated.  (It has
previously been shown that they are undominated~\cite{GC08a}, but
as we pointed out, being welfare undominated is  a stronger property.)

\begin{theorem} 
\label{thm:oel}
No feasible Groves mechanism welfare dominates an OEL
mechanism.
\end{theorem}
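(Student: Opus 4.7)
My plan is to argue by contradiction: suppose $(f,t')$ is a feasible Groves mechanism that welfare dominates the given OEL mechanism $(f,t^{OEL})$. Since $t^{OEL}$ is anonymous and the total initial welfare $\sum_j v_j(f(\theta),\theta_j)$ is permutation invariant in the multi-unit setting, Lemma~\ref{lem:anon} lets me replace $t'$ by its anonymization and retain both feasibility and welfare dominance. I may therefore assume $t'$ is anonymous and write $t'_i(\theta)=t^{VCG}_i(\theta)+r'(\theta_{-i})$ for a single permutation-independent function $r'$, while the OEL mechanism has the analogous form $t^{OEL}_i(\theta)=t^{VCG}_i(\theta)+r(\theta_{-i})$ with $r$ the explicit linear function listed in the definition of OEL.

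Next I exploit the key property of OEL recalled just before the theorem: for each choice of $k$ the tax sum $\sum_i t^{OEL}_i(\theta)$ is nonpositive everywhere and equals $0$ precisely on an explicit ``tight'' set ($[\theta]_1=U$ for $k=0$; $[\theta]_k=[\theta]_{k+1}$ for $k\in\{1,\ldots,n-1\}$; $[\theta]_n=L$ for $k=n$). On any tight $\theta$, feasibility of $t'$ gives $\sum_i t'_i(\theta)\leq 0$ while welfare dominance gives $\sum_i t'_i(\theta)\geq \sum_i t^{OEL}_i(\theta)=0$, so both sums vanish and agree. Setting $\delta:=r'-r$, this says that the anonymous function $F(\theta):=\sum_i\delta(\theta_{-i})$ is nonnegative on $\Theta^n$, vanishes on the entire tight set, and is strictly positive at some $\theta^*$ by strict welfare dominance. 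The goal is to derive a contradiction by proving $F\equiv 0$.

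The crux is to propagate $F=0$ off the tight set, and this is where I expect to do the real work. I plan to induct on the ``shape'' of the sorted bid vector, starting from tight configurations that are as close to constant as possible. For $k\in\{1,\ldots,n-1\}$, the constant configuration $\theta=(v,\ldots,v)$ is tight for every $v\in[L,U]$, so $n\,\delta(v,\ldots,v)=0$ forces $\delta=0$ on constant $(n-1)$-tuples; then tight configurations of the form $(a_1,\ldots,a_j,v,\ldots,v,b_1,\ldots,b_\ell)$, with the two coinciding bids placed at sorted positions $k$ and $k+1$ and both equal to $v$, should let me peel players one by one from either side of the pivot and extend $\delta\equiv 0$ inductively to all anonymous $(n-1)$-tuples. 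The main obstacle is to handle the boundary mechanisms $k=0$ and $k=n$, where the constant base case is unavailable and the tight set only forces some coordinate to equal $U$ (resp.\ $L$); there I plan instead to take configurations with several coordinates set to $U$ (resp.\ $L$) and peel from those, reducing higher-complexity tuples to tuples already known to give $\delta=0$. Once $\delta\equiv 0$ is established, we obtain $F\equiv 0$, contradicting $F(\theta^*)>0$.
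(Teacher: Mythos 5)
Your proposal follows essentially the same route as the paper: reduce to anonymous mechanisms via Lemma~\ref{lem:anon}, observe that on the tight set feasibility and welfare dominance together force $\sum_i\delta(\theta_{-i})=0$, and then propagate $\delta\equiv 0$ outward from the tight configurations. Your ``peeling'' step is exactly the paper's backwards induction on the number of coordinates of $\theta_{-i}$ equal to the pivot value $[\theta_{-i}]_k$ (resp.\ to $U$ or $L$ in the boundary cases $k=0,n$), where setting $\theta_i$ equal to that value makes $\theta$ tight and expresses $0=\sum_j\delta(\theta_{-j})$ as a positive multiple of the unknown $\delta(\theta_{-i})$ plus terms already known to vanish.
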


According to Lemma~\ref{lem:anon}, we only need to prove this for the case
of anonymous Groves mechanisms:

\begin{lemma} 
\label{lem:oel}
No feasible anonymous Groves mechanism  welfare dominates an OEL mechanism.
\end{lemma}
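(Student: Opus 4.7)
The plan is to suppose for contradiction that some feasible anonymous Groves mechanism $h'$ welfare dominates the OEL mechanism OEL$_k$, and write both mechanisms in the form $t_i(\theta) = t_i^{VCG}(\theta) + r(\theta_{-i})$ with $r$ a permutation-independent function of $n-1$ arguments. Call the corresponding redistribution functions $r'$ and $r^{OEL}$. In the multi-unit unit-demand setting we have $\sum_{i=1}^n t_i^{VCG}(\theta) = -m[\theta]_{m+1}$, so feasibility of $h'$ reads $\sum_i r'(\theta_{-i}) \leq m[\theta]_{m+1}$ everywhere, while welfare dominance reads $\sum_i r'(\theta_{-i}) \geq \sum_i r^{OEL}(\theta_{-i})$ everywhere, with strict inequality at some $\theta^{*}$.

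The first step is to exploit the explicit set $B_k$ on which OEL$_k$ is budget balanced (described just before the statement of the lemma): on $B_k$ we have $\sum_i r^{OEL}(\theta_{-i}) = m[\theta]_{m+1}$, and the two inequalities above collapse to the identity $\sum_i r'(\theta_{-i}) = m[\theta]_{m+1}$ for every $\theta \in B_k$.

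The main step is to propagate this identity off $B_k$. For $k \in \{1, \ldots, n-1\}$, sort $\theta$ in decreasing order, so that $B_k$ is parameterized as $\theta_1 \geq \ldots \geq \theta_{k-1} \geq x = x \geq \theta_{k+2} \geq \ldots \geq \theta_n$ with $\theta_k = \theta_{k+1} = x$. The identity then becomes a functional equation in the values of the symmetric $r'$ on various sorted $(n-1)$-tuples; by inductively collapsing the remaining free coordinates onto $x$ (i.e., starting from tuples with the most ties and peeling off ties one by one), the values of $r'$ get pinned down uniquely. The pattern is already visible in the single-item, three-bidder BC case: the identity on $B_{m+1} = \{(a, c, c) : a \geq c\}$ reads $r'(c, c) + 2r'(a, c) = c$; setting $a = c$ first forces $r'(c, c) = c/3$, after which $r'(a, c) = c/3 = r^{BC}(a, c)$ for every $a \geq c$. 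The boundary cases $k = 0$ and $k = n$ are handled analogously using the constraints $[\theta]_1 = U$ and $[\theta]_n = L$, respectively. Once $r' = r^{OEL}$ on all sorted $(n-1)$-tuples, permutation-independence extends this to all $(n-1)$-tuples, and hence $\sum_i r'(\theta_{-i}) = \sum_i r^{OEL}(\theta_{-i})$ everywhere, contradicting the strict inequality assumed at $\theta^{*}$.

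The main obstacle is the inductive pinning-down of $r'$ in this propagation step: for arbitrary $n$, $m$, and $k$, one must track how the constraint $[\theta]_k = [\theta]_{k+1}$ propagates through the symmetric sum $\sum_i r'(\theta_{-i})$, and verify that the resulting system of equations really does uniquely determine $r'$. The specific form of the OEL coefficients (only a single $c_k$ is nonzero in the interior cases) suggests the induction proceeds cleanly, but the boundary cases $k \in \{0, n\}$, where $B_k$ lives on the boundary of the type space, require separate bookkeeping.
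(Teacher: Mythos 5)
Your proposal is correct and follows essentially the same route as the paper: on the budget-balance set of the OEL mechanism with index $k$, feasibility plus welfare dominance force $\sum_i r'(\theta_{-i}) = \sum_i r^{OEL}(\theta_{-i})$, and the paper then pins down $r' \equiv r^{OEL}$ by exactly the backwards induction you sketch, namely on the number of bids among $\theta_{-i}$ equal to $[\theta_{-i}]_k$ (or to $U$, resp.\ $L$, when $k=0$, resp.\ $k=n$), starting from the all-tied profile and creating the tie $[\theta]_k=[\theta]_{k+1}$ by setting $\theta_i := [\theta_{-i}]_k$. The obstacle you flag resolves just as in your three-bidder example: in $\sum_{j} \bigl(r'(\theta_{-j})-r^{OEL}(\theta_{-j})\bigr)=0$ the tied coordinates contribute identical terms and the remaining terms vanish by the induction hypothesis, so the sum is a positive multiple of the unknown difference, which must therefore be zero.
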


\eat{
\begin{proof}
We first prove: {\em no OEL mechanism with index $k\in \{1,\LL,n-1\}$ is
welfare dominated by a feasible anonymous Groves mechanism.}

Suppose a feasible anonymous Groves mechanism (corresponding to the tax
function) $t$ welfare dominates an OEL mechanism (corresponding to the tax
function) $t^{OEL}$ with index $k\in \{1,\LL, n-1\}$.


Both $t$ and $t^{OEL}$ are tax functions of anonymous Groves mechanisms.  For
any $i$ and any $\theta$, we can write $t_i(\theta)$ as
$t^{VCG}_i(\theta)+h(\theta_{-i})$, and we can write $t^{OEL}_i(\theta)$ as
$t^{VCG}_i(\theta)+h^{OEL}(\theta_{-i})$.  For any $i$ and $\theta_{-i}$, we define
the following function:
$\Delta(\theta_{-i})=h(\theta_{-i})-h^{OEL}(\theta_{-i})$.  

Since $t$ welfare dominates $t^{OEL}$, we have that for any $\theta$,
$\sum_{i=1}^nt_i(\theta)\ge \sum_{i=1}^nt_i^{OEL}(\theta)$.  That is, for any
$\theta$, $\sum_{i=1}^n\Delta(\theta_{-i})\ge 0$.  

We also have that, whenever $[\theta]_{k+1}=[\theta]_{k}$, we have 
$\sum_{i=1}^nt_i^{OEL}(\theta)=0$; in this case, because $t$ is feasible,
we must have $\sum_{i=1}^nt_i(\theta) = 0$ and hence
$\sum_{i=1}^n\Delta(\theta_{-i})=0$.

Now we claim that $\Delta(\theta_{-i})=0$ for all $\theta_{-i}$.

Let $c(\theta_{-i})$ be the number of bids among $\theta_{-i}$ that equal
$[\theta_{-i}]_{k}$.  Hence, we must show that for all $\theta_{-i}$ with
$c(\theta_{-i})\ge 1$, we have $\Delta(\theta_{-i})=0$.

We now prove it by induction on the value of $c$ (backwards, from $n-1$ to
$1$).

\II

\noindent \emph{Base case: $c=n-1$.} 

Suppose there is a $\theta_{-i}$ with $c(\theta_{-i}) = n-1$.  That is,
all the bids in $\theta_{-i}$ are identical.  When $\theta_i$ is also equal
to the bids in $\theta_{-i}$, all bids in $\theta$ are the same so that
$[\theta]_{k+1}=[\theta]_{k}$. Hence, by our earlier observation, we have
$\sum_{j=1}^n\Delta(\theta_{-j})=0$.  But we know that for all $j$,
$\Delta(\theta_{-j})$ is the same value.  Hence $\Delta(\theta_{-i})=0$ for
all $\theta_{-i}$ when $c(\theta_{-i})= n-1$.  \II

\noindent \emph{Induction step.} 

Let us assume that for all $\theta_{-i}$, if $c(\theta_{-i})\ge p$ (where
$p \in \{2, \LL, n-1\}$), then $\Delta(\theta_{-i})=0$.  Now we consider
any $\theta_{-i}$ with $c(\theta_{-i}) = p-1$. When $\theta_i$ is equal to
$[\theta_{-i}]_{k}$, we have $[\theta]_{k}=[\theta]_{k+1}$, which implies
that $\sum_{j=1}^n\Delta(\theta_{-j})=0$.  For all $j$ with
$\theta_j=[\theta_{-i}]_{k}$, $\Delta(\theta_{-j})=\Delta(\theta_{-i})$, and
for other $j$, $c(\theta_{-j}) = p$.  Therefore, by the induction
assumption, $\sum_{j=1}^n \Delta(\theta_{-j})$ is a positive multiple of
$\Delta(\theta_{-i})$, which implies that $\Delta(\theta_{-i})=0$.

By induction, we have shown that $\Delta(\theta_{-i})=0$ for all
$\theta_{-i}$.  This implies that $t$ and $t^{OEL}$ are identical. Hence,
no other feasible anonymous Groves mechanism welfare dominates an OEL
mechanism with index $k\in \{1,\LL,n-1\}$.
\\
\\
\indent Now we prove: {\em no OEL mechanism with index $k=0$ is
welfare dominated by a feasible anonymous Groves mechanism.}

Suppose a feasible anonymous Groves mechanism (corresponding to the tax
function) $t$ welfare dominates an OEL mechanism (corresponding to the tax
function) $t^{OEL}$ with index $k=0$.

Both $t$ and $t^{OEL}$ are tax functions of anonymous Groves mechanisms.  For
any $i$ and any $\theta$, we can write $t_i(\theta)$ as
$t^{VCG}_i(\theta)+h(\theta_{-i})$, and we can write $t^{OEL}_i(\theta)$ as
$t^{VCG}_i(\theta)+h^{OEL}(\theta_{-i})$.  For any $i$ and $\theta_{-i}$, we define
the following function:
$\Delta(\theta_{-i})=h(\theta_{-i})-h^{OEL}(\theta_{-i})$.  

Since $t$ welfare dominates $t^{OEL}$, we have that for any $\theta$,
$\sum_{i=1}^nt_i(\theta)\ge \sum_{i=1}^nt_i^{OEL}(\theta)$.  That is, for any
$\theta$, $\sum_{i=1}^n\Delta(\theta_{-i})
\ge 0$.  

We also have that, whenever $[\theta]_1=U$, we have 
$\sum_{i=1}^nt_i^{OEL}(\theta)=0$; in this case, because $t$ is feasible,
we must have $\sum_{i=1}^nt_i(\theta) = 0$ and hence
$\sum_{i=1}^n\Delta(\theta_{-i})=0$.

Now we claim that $\Delta(\theta_{-i})=0$ for all $\theta_{-i}$.

Let $c(\theta_{-i})$ be the number of bids among $\theta_{-i}$ that equal
$U$.  Hence, we must show that for all $\theta_{-i}$ with
$c(\theta_{-i})\ge 0$, we have $\Delta(\theta_{-i})=0$.

We now prove it by induction on the value of $c$ (backwards, from $n-1$ to
$0$).

\II

\noindent \emph{Base case: $c=n-1$.} 

Suppose there is a $\theta_{-i}$ with $c(\theta_{-i}) = n-1$.  That is,
all the bids in $\theta_{-i}$ are equal to $U$.  When $\theta_i$ is also equal
to the bids in $U$, by our earlier observation, we have
$\sum_{j=1}^n\Delta(\theta_{-j})=0$.  But we know that for all $j$,
$\Delta(\theta_{-j})$ is the same value.  Hence $\Delta(\theta_{-i})=0$ for
all $\theta_{-i}$ when $c(\theta_{-i})= n-1$.  \II

\noindent \emph{Induction step.} 

Let us assume that for all $\theta_{-i}$, if $c(\theta_{-i})\ge p$ (where
$p \in \{2, \LL, n-1\}$), then $\Delta(\theta_{-i})=0$.  Now we consider
any $\theta_{-i}$ with $c(\theta_{-i}) = p-1$. When $\theta_i$ is equal to
$U$, we have $[\theta]_1=U$, which implies
that $\sum_{j=1}^n\Delta(\theta_{-j})=0$.  For all $j$ with
$\theta_j=U$, $\Delta(\theta_{-j})=\Delta(\theta_{-i})$, and
for other $j$, $c(\theta_{-j}) = p$.  Therefore, by the induction
assumption, $\sum_{j=1}^n \Delta(\theta_{-j})$ is a positive multiple of
$\Delta(\theta_{-i})$, which implies that $\Delta(\theta_{-i})=0$.

By induction, we have shown that $\Delta(\theta_{-i})=0$ for all
$\theta_{-i}$.  This implies that $t$ and $t^{OEL}$ are identical. Hence,
no other feasible anonymous Groves mechanism welfare dominates an OEL
mechanism with index $k=0$.
\\
\\
\indent It remains to prove: {\em no OEL mechanism with index $k=n$ is
welfare dominated by a feasible anonymous Groves mechanism.}

This case is similar to the case of $k=0$ and we omit it here.
\end{proof}
}


We now show that within the family of anonymous and linear Groves
mechanisms, the OEL mechanisms are the only ones that are welfare
undominated.  Actually, they are also the only ones that are undominated,
which is a stronger claim since being undominated is a weaker property.

\begin{theorem} \label{thm:characterize} If a feasible anonymous linear Groves
mechanism is undominated, then it must be an OEL mechanism.
\end{theorem}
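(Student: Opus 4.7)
The plan is to first reduce to a tightness condition via the BCGC transformation, then analyze the sum-of-taxes function on the sorted-bid simplex, and finally match the resulting proportionality against the OEL coefficient formulas.

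First I would apply Note~\ref{not:bc}$(ii)$ to the pair $(t,t^{BCGC})$: the hypothesis that $t$ is feasible and undominated forces $t=t^{BCGC}$, i.e., $S^{BCGC}_i(\theta_{-i})=0$ for every $\theta_{-i}$. So the characterizing property I will exploit is that for every $\theta_{-i}$, the map $\theta_i\mapsto\sum_k t_k(\theta)$ attains its maximum value $0$ on $[L,U]$.

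Next I would put $\sum_k t_k(\theta)$ in closed form. For an anonymous linear Groves mechanism with redistribution $r(\theta_{-i})=c_0+\sum_j c_j[\theta_{-i}]_j$, a direct computation gives $\sum_k t_k(\theta)=F([\theta])$ where $F(y):=nc_0+\sum_\ell a_\ell y_\ell-my_{m+1}$ with $a_1=(n-1)c_1$, $a_\ell=(n-\ell)c_\ell+(\ell-1)c_{\ell-1}$ for $2\le\ell\le n-1$, and $a_n=(n-1)c_{n-1}$. On the sorted simplex $\Sigma:=\{U\ge y_1\ge\ldots\ge y_n\ge L\}$, feasibility says $F\le 0$ and the previous paragraph says $\{F=0\}\cap\Sigma$ projects surjectively onto the sorted $(n-1)$-tuples (by deleting one coordinate of $y$ and re-sorting). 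This projection target has dimension $n-1$, so $\{F=0\}\cap\Sigma$ must itself have dimension $n-1$; but $\{F=0\}$ is a supporting hyperplane of $\Sigma$, so its intersection with $\Sigma$ is a face, and a face of codimension $1$ is a facet. The facets of $\Sigma$ are exactly $\{y_1=U\}$, $\{y_n=L\}$, and $\{y_k=y_{k+1}\}$ for $k\in\{1,\ldots,n-1\}$, so $F$ must be a scalar multiple of the defining linear form of one of these facets, with the sign of the scalar $\alpha$ dictated by $F\le 0$.

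Finally I would match each of the $n+1$ resulting forms against the OEL coefficient formulas. Each facet choice determines the values $\tilde a_\ell := a_\ell-m\mathbf{1}[\ell=m+1]$ (most of which are zero, one or two specified by $\alpha$) and also $c_0$; substituting into the defining relations yields a linear recurrence for $c_1,\ldots,c_{n-1}$. Unrolling the recurrence and imposing the boundary condition $a_n=(n-1)c_{n-1}$ produces the alternating binomial expressions $(-1)^{m-i}\binom{n-i-1}{n-m-1}/\binom{m-1}{i-1}$ (for $i\le m$) and its mirror for $i\ge m+1$ that appear in the OEL definition, and pins down $\alpha$ up to sign. The parity constraint $k-m$ odd in the OEL definition is then exactly the sign consistency requirement that the recurrence-forced sign of $\alpha$ agrees with the feasibility-forced sign (e.g.\ for $k=0$, the recurrence yields $\alpha=(-1)^{m-1}(n-1)\binom{n-2}{m-1}$, which is non-negative precisely when $m$ is odd, i.e.\ $k-m$ is odd). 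The main obstacle is this last bookkeeping step, which requires routine but careful binomial identities to verify the match in each of the $n+1$ facet cases; the rest of the proof is essentially a clean extremality argument on the sorted simplex.
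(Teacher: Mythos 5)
Your proposal is correct and passes through the same waypoints as the paper's proof --- the tightness condition $\max_{\theta_i}\sum_k t_k(\theta)=0$ for every $\theta_{-i}$, the expression of the total tax as an affine function $F$ of the order statistics with the triangular relations between the $C_\ell$ and the redistribution coefficients, the reduction to $n+1$ cases according to which linear form vanishes identically, and the final coefficient matching --- but the two middle steps are argued by genuinely different means. For the tightness condition, the paper argues directly (if the maximum were below $-\epsilon$ one could raise one agent's rebate by $\epsilon$ and obtain a dominating feasible Groves mechanism), whereas you route it through the BCGC transformation and Note~\ref{not:bc}$(ii)$; these are equivalent, since $t=t^{BCGC}$ is exactly the statement $S^{BCGC}_i(\theta_{-i})=0$ for all $\theta_{-i}$. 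For the key step, the paper enumerates the candidate maximizers explicitly (the maximum over $\theta_i$ of the piecewise linear total tax is attained at one of the breakpoints $L,s_{n-1},\dots,s_1,U$, yielding expressions $(0),\dots,(n)$), notes that at least one of their zero sets must have positive $(n-1)$-dimensional measure, and invokes Lemma~\ref{lm:measure} to conclude that this expression vanishes identically. You instead observe that $\{F=0\}\cap\Sigma$ is the argmax face of the affine function $F$ on the sorted simplex $\Sigma$, that surjectivity of the delete-one-coordinate projections forces this face to have dimension at least $n-1$, and that the $n+1$ facets of $\Sigma$ are precisely $\{y_1=U\}$, $\{y_k=y_{k+1}\}$, $\{y_n=L\}$. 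This is a cleaner, more conceptual argument; the paper's version is more elementary but amounts to the same dichotomy. On the last step you are, if anything, more careful than the paper, which writes ``details omitted'': you explicitly identify the parity condition ($k-m$ odd) with the requirement that the recurrence-forced sign of the scalar $\alpha$ agree with the feasibility-forced sign, and your sample value $\alpha=(-1)^{m-1}(n-1)\binom{n-2}{m-1}$ for $k=0$ is consistent with the OEL coefficient $c_1=(-1)^{m-1}\binom{n-2}{m-1}$.

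One point you should make explicit: the dimension count only shows that the argmax face has dimension \emph{at least} $n-1$, so a priori it could be all of $\Sigma$, i.e., $F\equiv 0$ (a budget-balanced mechanism), in which case ``a face of codimension $1$ is a facet'' does not apply. This case is excluded by the same triangular system you use for the matching: $C_n=\dots=C_{m+2}=0$ forces $c_{n-1}=\dots=c_{m+1}=0$, then $C_{m+1}=0$ forces $c_m=1$, and propagating down gives $c_1=(-1)^{m-1}\binom{n-2}{m-1}\neq 0$, contradicting $C_1=(n-1)c_1=0$. It is a one-line addendum, but without it the facet dichotomy is incomplete.
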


Hence, we
have the following complete characterization in this context:

\begin{corollary}
  A feasible anonymous linear Groves mechanism is (welfare) undominated if
  and only if it is an OEL mechanism.
\end{corollary}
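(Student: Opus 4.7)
The plan is to derive the corollary as an immediate combination of Theorem~\ref{thm:oel} and Theorem~\ref{thm:characterize}, using the observation (noted just after the definitions of dominance and welfare dominance in Section~\ref{sec:prelim}, and again in Section~\ref{subsec:relwork}) that every welfare undominated mechanism is also undominated. I would handle the two directions separately and dispatch the undominated and welfare undominated versions of the corollary in parallel.

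For the ``if'' direction, I start from a feasible anonymous linear Groves mechanism that belongs to the OEL family. Theorem~\ref{thm:oel} asserts that no feasible Groves mechanism welfare dominates an OEL mechanism, which directly gives welfare undominance. Since welfare undominance is strictly stronger than undominance, undominance follows for free, so the OEL mechanism satisfies both optimality notions simultaneously.

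For the ``only if'' direction, the undominated case is exactly what Theorem~\ref{thm:characterize} states, so there is nothing further to prove. For the welfare undominated case, I would first apply the implication ``welfare undominated $\Rightarrow$ undominated'' to land in the hypothesis of Theorem~\ref{thm:characterize}, and then conclude that the mechanism must be OEL. Combining the two directions yields the stated equivalence for both notions.

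There is no real obstacle here: the substantive content lies in Theorem~\ref{thm:oel} (which in turn reduces to the anonymous case via Lemma~\ref{lem:anon} and Lemma~\ref{lem:oel}) and Theorem~\ref{thm:characterize}, and the corollary is essentially a bookkeeping statement packaging them together. The only subtlety worth stating explicitly is the direction of implication between the two optimality notions: welfare undominance is the stronger property, so the weaker hypothesis ``undominated'' in Theorem~\ref{thm:characterize} is automatically satisfied whenever the mechanism is welfare undominated, while the stronger conclusion ``welfare undominated'' for OEL comes from Theorem~\ref{thm:oel}. Getting this direction right is the only thing one could plausibly trip over.
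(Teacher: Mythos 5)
Your proposal is correct and follows exactly the route the paper intends: the corollary is obtained by combining Theorem~\ref{thm:oel} (OEL mechanisms are welfare undominated, hence also undominated) with Theorem~\ref{thm:characterize} (undominated implies OEL, which covers the welfare undominated case a fortiori since welfare undominance implies undominance). You also correctly identify the one subtlety, namely the direction of the implication between the two optimality notions.
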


%

The above corollary also shows that if we consider only Groves mechanisms that
are anonymous and linear in the setting of multi-unit auctions with unit demand,
then undominance and welfare undominance are 
equivalent.\footnote{Thus, we have also characterized all
  undominated Groves mechanisms that are anonymous and linear. There is no
  corresponding result in \cite{GC08a}.}

\section{Public project problem with equal participation costs}

We now study a well known class of decision problems, namely public project problems---see, e.g.,
\cite{MWG95,Mou88,Moo06}.

\paragraph{Public project problem.} \label{exa:public1}
\mbox{} 
\NI
Consider
$
(D, \Theta_1, \LL, \Theta_n, v_1, \LL, v_n),
$
where 

\begin{itemize}
\item[$\bullet$] $D = \{0, 1\}$
(reflecting whether a project is canceled or takes place),

\item[$\bullet$] for all $i \in \{1, \LL, n\}$,
$\Theta_i = [0,c]$, where $c > 0$,

\item[$\bullet$] for all $i \in \{1, \LL, n\}$, $v_i(d, \theta_i) := d (\theta_i - \frac{c}{n})$,

\eat{
\item[$\bullet$] $
        f(\theta) :=
        \left\{
        \begin{array}{l@{\extracolsep{3mm}}l}
        1    & \mathrm{if}\  \sum_{i = 1}^{n} \theta_i \geq c \\
        0       & \mathrm{otherwise}
        \end{array}
        \right.
$
}
\end{itemize}

In this setting a set of $n$ agents needs to decide on financing a project
of cost $c$.  In the case that the project takes place, each agent
contributes the same share, $c/n$, so as to cover the total cost.  Hence
the participation costs of all players are the same. When the players
employ a tax-based mechanism to decide on the project, then in addition to
$c/n$, each player also has to pay or receive the tax, $t_i(\theta)$,
imposed by the mechanism.

By the result of Holmstrom~\cite{Hol79}, the only efficient and
strategy-proof tax-based mechanisms in this domain are Groves mechanisms.
To determine the efficient outcome for a given type vector $\theta$, note
that $\sum_{i = 1}^{n} v_i(d, \theta_i) = d(\sum_{i = 1}^{n} \theta_i -
c)$. Hence efficiency here for a mechanism $(f,t)$ means that $f(\theta) =
1$ if $\sum_{i = 1}^{n} \theta_i \geq c$ and $f(\theta) = 0$ otherwise,
i.e., the project takes place if and only if the declared total value that
the agents have for the project exceeds its cost.  We first observe the
following result.

\begin{note}
In the public project problem the BC mechanism coincides with VCG.
\end{note}

\begin{Proof}
It suffices to check that in equation (\ref{eq:BC}) it holds that  $S_i^{BCGC}$ $(\theta_{-i}) = 0$ for all $i$ and all $\theta_{-i}$. 
By the feasibility of VCG we have 
 $S_i^{BCGC} \leq 0$, hence all we need is to show that there is a value for $\theta_i'$ that makes 
the expression in  (\ref{eq:BC}) equal to $0$. Checking this is quite simple.
If $\sum_{j \neq i} \theta_j < \frac{n-1}{n} c$, 
then we take $\theta_i' := 0$ and otherwise
$\theta_i' := c$.
\end{Proof}

We now show that in fact VCG cannot be improved upon.
Before stating our result, we would like to note that one ideally would like to have a mechanism that is budget-balanced, i.e., $\sum_i t_i(\theta) = 0$ for all $\theta$, so that in total the agents only pay the cost of the project and no more. However this is not possible and as explained in~\cite[ page 861-862]{MWG95}, for the public
project problem no mechanism exists that is efficient,
strategy-proof and budget balanced. 
Our theorem below considerably strengthens this result, showing that VCG is optimal with respect to minimizing the total payment of the players.

\begin{theorem}  \label{thm:feasible}
  In the public project problem there exists no feasible Groves mechanism
  that welfare dominates the VCG mechanism.
\end{theorem}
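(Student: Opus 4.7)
The plan is to reduce the problem to anonymous Groves mechanisms via Lemma~\ref{lem:anon} and then show that the hypothetical welfare-dominating mechanism must coincide with VCG, giving the required contradiction. In the public project setting $G(\theta) = f(\theta)\bigl(\sum_j\theta_j - c\bigr)$ depends only on $\sum_j\theta_j$ and is therefore permutation-invariant, and VCG is anonymous here because all type sets and valuation functions coincide. Hence by Lemma~\ref{lem:anon}(ii) it suffices to assume, for contradiction, that some feasible anonymous Groves mechanism $t_i(\theta) = t_i^{VCG}(\theta) + h(\theta_{-i})$ with permutation-invariant $h$ welfare dominates VCG. Feasibility combined with welfare dominance forces $\sum_i h(\theta_{-i}) = 0$ at every $\theta$ at which $\sum_i t_i^{VCG}(\theta) = 0$, whereas strict dominance at some $\theta^{\star}$ yields $\sum_i h(\theta^{\star}_{-i}) > 0$. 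The remaining task is to prove $h \equiv 0$ on $[0,c]^{n-1}$.

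A short case analysis shows that $\sum_i t_i^{VCG}(\theta) = 0$ iff every agent is non-pivotal, equivalently, iff either (I) $S := \sum_j\theta_j \geq c$ and $\max_j\theta_j \leq S - (n-1)c/n$, or (II) $S \leq c$ and $\min_j\theta_j \geq S - (n-1)c/n$. I will then prove three vanishing claims in order, each by downward induction on $m$ from $n-1$ to $1$. Step~1 shows $h(c^{(m)},\vec{x}) = 0$ for every $m\in\{1,\ldots,n-1\}$ and every $\vec{x}\in[0,c]^{n-1-m}$: the completion $\theta = (c^{(m+1)},\vec{x})$ lies in case~(I) precisely because $mc + \sum\vec{x} \geq (n-1)c/n$ for any $m\geq 1$, and the identity $\sum_j h(\theta_{-j}) = 0$ collapses to $(m+1)\,h(c^{(m)},\vec{x}) = 0$ after the induction hypothesis kills the summands whose multiset contains $m+1$ copies of $c$. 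Step~2 shows $h(0^{(m)},\vec{y}) = 0$ analogously, splitting on the size of $T := \sum\vec{y}$: if $T\leq (n-1)c/n$ use $\theta=(0^{(m+1)},\vec{y})$ in case~(II); otherwise use $\theta=(c,0^{(m)},\vec{y})$ in case~(I) and invoke Step~1 to vanish the summands that contain a~$c$.

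Step~3 finishes the argument: for an arbitrary $\vec{y}\in[0,c]^{n-1}$, if $T\geq (n-1)c/n$ set $\theta=(c,\vec{y})$, which lies in case~(I) and whose remaining $n-1$ summands each contain a~$c$ and hence vanish by Step~1; if instead $T\leq (n-1)c/n$ set $\theta=(0,\vec{y})$, which lies in case~(II) and whose remaining summands each contain a~$0$ and hence vanish by Step~2. In both cases the identity reduces to $h(\vec{y}) = 0$, so $h\equiv 0$, contradicting strict welfare dominance at $\theta^{\star}$. The main obstacle is the bookkeeping around the induction: at each stage one must check that the completing coordinate $\theta_i \in \{0,c\}$ can be chosen so that (a) $\sum_i t_i^{VCG}(\theta) = 0$, and (b) after removing $\theta_i$ and one further entry, the leftover multiset is already covered by an earlier step. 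This is what forces the three-step organisation and the split on $T$ in Steps~2 and~3.
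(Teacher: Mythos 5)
Your proof is correct. It shares the paper's skeleton --- reduce to anonymous mechanisms via Lemma~\ref{lem:anon}, write the candidate as $t_i(\theta)=t^{VCG}_i(\theta)+h(\theta_{-i})$ with $h$ permutation independent, observe that feasibility plus welfare dominance force $\sum_i h(\theta_{-i})=0$ at every profile where the total VCG revenue vanishes, and then kill $h$ by a downward induction on the number of extreme bids --- but the induction itself is organized differently. The paper's proof of Lemma~\ref{anonymous} inducts only on the number of coordinates equal to $c$ and always completes a profile $x$ with one more $c$; when $x$ contains no $c$ and $\sum_j x_j < \frac{n-1}{n}c$, the completing agent is pivotal, the total VCG revenue is strictly negative, and the argument must extract $h(x)=0$ from a one-sided inequality (this is the delicate Case~2 of that proof, which turns on sign considerations for $g_1(\theta)$). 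You instead run two symmetric inductions, on the number of $c$'s and on the number of $0$'s, and in each step choose the completing bid ($c$ or $0$, according to whether the residual sum exceeds $\frac{n-1}{n}c$) so that every agent is non-pivotal and the total VCG revenue is exactly zero; your Step~3 then handles arbitrary profiles by the same device. This buys you a uniform argument --- every profile you evaluate yields the exact identity $\sum_i h(\theta_{-i})=0$, and each identity isolates a positive multiple of the unknown value of $h$ --- at the cost of one extra layer of bookkeeping (Step~2 and the split on $T$). Both routes reach the same conclusion; yours entirely avoids the only point where the paper's induction must reason about profiles with strictly negative VCG surplus, which is its most fragile step.
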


As in Section~\ref{sec:multi-unit}, we first establish the desired conclusion for anonymous Groves mechanisms and then extend it to arbitrary ones by Lemma~\ref{lem:anon}.

\begin{lemma}
\label{anonymous}
In the public project problem there exists no anonymous feasible Groves
mechanism that welfare dominates the VCG mechanism.
\end{lemma}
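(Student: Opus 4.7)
The plan is, by contradiction, to assume $t$ is an anonymous feasible Groves mechanism welfare dominating VCG and to show that necessarily $R \equiv 0$, contradicting the required strict inequality. Anonymity lets me write $t_i(\theta) = t_i^{VCG}(\theta) + r(\theta_{-i})$ for a single permutation-invariant function $r : [0,c]^{n-1} \to \mathbb{R}$; set $R(\theta) := \sum_{i=1}^n r(\theta_{-i})$. Welfare dominance gives $R \geq 0$ everywhere, and feasibility gives $R(\theta) \leq -\sum_i t_i^{VCG}(\theta)$. Since VCG is pay-only, the right-hand side vanishes exactly on the ``diamond'' $D^{*} := \{\theta : t_i^{VCG}(\theta) = 0 \text{ for all } i\}$, so $R \equiv 0$ on $D^{*}$. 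The preceding Note (BC coincides with VCG, equivalently $S_i^{BCGC} \equiv 0$) moreover guarantees that for every $v \in [0,c]^{n-1}$ at least one of $(v,0)$ and $(v,c)$ lies in $D^{*}$, supplying a rich family of identities $R(\theta) = 0$. It suffices to prove $r \equiv 0$.

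First, every constant vector $(\alpha, \ldots, \alpha)$ lies in $D^{*}$ for $\alpha \in [0,c]$ (by directly verifying both cases $\alpha \geq c/n$ and $\alpha \leq c/n$), so $r(\alpha, \ldots, \alpha) = 0$. Next, I show $r \equiv 0$ on $[c/n, c]^{n-1}$ by induction on the number $k$ of entries strictly above $c/n$. For such a $v$, I check that $\theta := (v, c/n) \in D^{*}$: the $n-2$ non-maximal entries of $v$ are each at least $c/n$, so $\sum v - \max v \geq (n-2)c/n$, which is exactly the diamond condition on the $\sum \theta \geq c$ side. Expanding $R(\theta) = 0$, the $k$ terms $r(\theta_{-i})$ coming from indices with $v_i > c/n$ vanish by induction (each such $\theta_{-i}$ has only $k-1$ entries above $c/n$ and still lies in $[c/n,c]^{n-1}$), while the remaining $n - k$ terms all equal $r(v)$ by anonymity; this gives $(n-k) r(v) = 0$, hence $r(v) = 0$. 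A completely symmetric argument on the $\sum \theta \leq c$ side yields $r \equiv 0$ on $[0, c/n]^{n-1}$.

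The last and most delicate step is mixed $v$ with entries both below and above $c/n$. Using the identity $R(v, 0) = 0$ when $\sum_j v_j \leq (n-1)c/n$, or $R(v, c) = 0$ otherwise, I express $r(v)$ as $-\sum_{i=1}^{n-1} r(v_{-i}, e)$ for $e \in \{0, c\}$. A nested induction on the profile $(p, q)$ counting the entries of $v$ strictly below and strictly above $c/n$ finishes the argument: terms $r(v_{-i}, e)$ in which the removed entry $v_i$ lies on the opposite side of $c/n$ from $e$ reduce (possibly after one further application of the same identity) to all-large or all-small configurations and vanish by the previous paragraph; the remaining terms either have a strictly simpler $(p, q)$-profile (handled by the inner induction) or fall into anonymity equivalence classes that collapse into a multiple of $r(v)$, so the equation can be solved for $r(v)$. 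The main obstacle is executing this nested induction cleanly: one must verify in each case that the chosen extension $(v, e)$ actually lies in $D^{*}$ and that the combinatorial accounting of $(p, q)$-profiles really does terminate in already-known values. Once $r \equiv 0$ everywhere, $R \equiv 0$ as well, contradicting the strict inequality in welfare dominance.
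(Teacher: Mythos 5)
Your reduction to showing $r\equiv 0$ is sound, and the first two stages are correct and complete: welfare dominance gives $R\ge 0$ everywhere, feasibility plus the pay-only property of VCG forces $R=0$ on $D^{*}$, constant profiles give $r(\alpha,\ldots,\alpha)=0$, and the inductions over $[c/n,c]^{n-1}$ and $[0,c/n]^{n-1}$ via the extension $(v,c/n)$ check out (the pivotality verifications and the count $(n-k)r(v)=0$ are right). This is a genuinely different route from the paper's, which inducts downward on the number of coordinates equal to the endpoint $c$ and always appends a $c$ to the profile, handling the single pivotal configuration by a separate inequality argument rather than by partitioning the cube around $c/n$.

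The gap is exactly where you flag it: the mixed case, and the induction ``on the profile $(p,q)$'' as described does not close. Take $e=0$ (so $\sum_j v_j\le (n-1)c/n$) and expand $r(v)=-\sum_{i=1}^{n-1}r(v_{-i},0)$. The terms where some $v_i>c/n$ is removed have profile $(p+1,q-1)$; these are fine for an outer induction on $q$, but your claim that they reduce to all-large/all-small configurations ``after one further application'' is false once $q\ge 2$. The real problem is the terms where some $v_i$ with $0<v_i<c/n$ is removed: $(v_{-i},0)$ then has exactly the same profile $(p,q)$ as $v$, it is not in the anonymity class of $v$ (only the removals with $v_i=0$ collapse to $r(v)$), and it is not strictly simpler in your ordering, so the recursion never bottoms out. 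The fix is a third induction parameter: for fixed $q$, induct downward on the number $z$ of coordinates equal to $0$ (respectively equal to $c$ in the $e=c$ regime). Each same-side nonzero removal strictly increases $z$; the regime $\sum_j v_j\le (n-1)c/n$ (resp.\ $>$) is preserved under $v\mapsto (v_{-i},e)$, so the needed membership in $D^{*}$ persists; and the $v_i=0$ removals contribute $z\cdot r(v)$, so one solves $(1+z)r(v)=0$. With that extra parameter the argument goes through, but as written the combinatorial accounting you yourself identify as the obstacle is genuinely missing, not merely unpolished.
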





\section{Public project problem: the general case}
\label{sec:public1}

The assumption that we have made so far in the public project problem that
each player's cost share is the same may not always be realistic. Indeed, it
may be argued that `richer' players (read: larger enterprises) should
contribute more.  Does it matter if we modify the formulation of the
problem appropriately? The answer is `yes'.  First, let us formalize this
problem.  We assume now that each (initial) utility function is of the form
$v_i(d, \theta_i) := d (\theta_i - c_i)$,
where for all $i \in \C{1, \LL, n}$, \ $c_i > 0$ and $\sum_{i = 1}^{n} c_i = c$.

In this setting, $c_i$ is the cost share of the project cost to be
financed by player $i$. 
We call the resulting problem the \oldbfe{general public project problem}.
It is taken from \cite[page 518]{Moo06}.  We first prove the following
optimality result concerning the VCG mechanism.

\begin{theorem} \label{thm:opt}
In the general public project problem there is no pay-only Groves mechanism that dominates the VCG mechanism.
\end{theorem}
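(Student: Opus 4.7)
The plan is to reduce the theorem to a pointwise analysis of the difference between an arbitrary Groves mechanism and VCG, and then to exploit pay-only on one side and dominance on the other to squeeze this difference to zero.

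First I would observe that any Groves mechanism $(f,t')$ has the same efficient decision function $f$ as VCG, and that for each player $i$, $t'_i(\theta) - t_i^{VCG}(\theta) = h'_i(\theta_{-i}) - h_i^{VCG}(\theta_{-i})$ depends only on $\theta_{-i}$. Write $\Delta_i(\theta_{-i})$ for this difference. Two immediate consequences: (a)~if $t'$ dominates $t^{VCG}$, then $\Delta_i(\theta_{-i}) \ge 0$ for all $i$ and all $\theta_{-i}$, with strict inequality somewhere; (b)~if $t'$ is pay-only, then for all $i$ and all $\theta_{-i}$,
\[
\Delta_i(\theta_{-i}) \;\le\; -\max_{\theta_i \in [0,c]} t_i^{VCG}(\theta_i, \theta_{-i}).
\]
The strategy is therefore to show that this maximum is $0$ for every $i$ and every $\theta_{-i}$; combining with $\Delta_i \ge 0$ forces $\Delta_i \equiv 0$, so $t' = t^{VCG}$, contradicting strict domination.

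The main step is therefore the following claim: for every $i$ and every $\theta_{-i} \in [0,c]^{n-1}$, there exists $\theta_i \in [0,c]$ with $t_i^{VCG}(\theta_i, \theta_{-i}) = 0$. Using the formulas $v_j(d,\theta_j) = d(\theta_j - c_j)$, $\max_{d\in\{0,1\}} \sum_{j\neq i} v_j(d,\theta_j) = \max\bigl(0,\sum_{j\neq i}(\theta_j - c_j)\bigr)$, and $f(\theta) = 1$ iff $\sum_j (\theta_j - c_j) \ge 0$, I would split into two cases. If $\sum_{j\neq i}(\theta_j - c_j) \ge 0$, set $\theta_i := c_i$ (which lies in $[0,c]$ since $c_i \le c$); then $\sum_j(\theta_j - c_j) \ge 0$, so $f(\theta)=1$, and both $g_i(\theta)$ and $\max_d \sum_{j\neq i} v_j(d,\theta_j)$ equal $\sum_{j\neq i}(\theta_j-c_j)$, giving $t_i^{VCG}=0$. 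If instead $\sum_{j\neq i}(\theta_j - c_j) < 0$, set $\theta_i := 0$; then $\sum_j(\theta_j - c_j) \le \theta_i - c_i + \sum_{j\neq i}(\theta_j - c_j) < -c_i < 0$, so $f(\theta)=0$ and $g_i(\theta)=0$, while $\max_d \sum_{j\neq i} v_j(d,\theta_j) = 0$ as well, again yielding $t_i^{VCG}=0$.

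Putting the pieces together: for each $i,\theta_{-i}$ we have both $\Delta_i(\theta_{-i}) \ge 0$ (from dominance) and $\Delta_i(\theta_{-i}) \le 0$ (from pay-only applied at the $\theta_i$ produced above). Hence $\Delta_i \equiv 0$, so $t' = t^{VCG}$, contradicting the strictness requirement in the definition of dominance. I expect the routine obstacle to be only the case analysis in the claim above; once the zero-payment witness is found, the argument is immediate. Note that nothing in this plan uses Lemma~\ref{lem:anon}, because dominance (unlike welfare dominance) already operates coordinate-by-coordinate, so no symmetrization is needed.
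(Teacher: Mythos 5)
Your proof is correct and follows essentially the same route as the paper's: both arguments hinge on exhibiting, for each $i$ and $\theta_{-i}$, a report $\theta_i$ at which the VCG payment of player $i$ is zero (the paper uses $\theta_i := c_i$ in both cases, you use $c_i$ or $0$ depending on the sign of $\sum_{j\neq i}(\theta_j-c_j)$ — both work), and then use pay-only at that point to force $h'_i(\theta_{-i}) \le h_i^{VCG}(\theta_{-i})$, which together with dominance gives equality and contradicts strictness. The only cosmetic difference is that you establish $\Delta_i\equiv 0$ globally while the paper localizes to the single pair $(\theta,i_0)$ witnessing strict dominance.
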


It remains an open problem whether the above result can be extended to the
welfare dominance relation.  On the other hand, the above theorem cannot be
extended to feasible Groves mechanisms, as the following result holds.

\begin{theorem}
\label{thm:instanceexists}
For any $n\geq 3$, an instance of the general public project problem with $n$ players exists for which
the BC mechanism dominates the VCG mechanism.
\end{theorem}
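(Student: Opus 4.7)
The plan is to apply Note~\ref{not:bc}(ii): since VCG is feasible, the mechanism $t^{BCGC}$ obtained from it either coincides with VCG or strictly dominates it, so it suffices to exhibit an instance in which the two differ, equivalently one in which $S_i^{BCGC}(\theta_{-i}) < 0$ for some agent $i$ and some profile $\theta_{-i}$. Writing $T(\theta) := \sum_j t_j^{VCG}(\theta)$, this reduces to finding $\theta_{-i}$ such that $T(\theta_i', \theta_{-i}) < 0$ for every $\theta_i' \in [0,c]$. Since VCG is pay-only in this setting and $t_j^{VCG}(\theta) < 0$ precisely when agent $j$ is \emph{pivotal} at $\theta$, the condition $T(\theta) < 0$ is equivalent to the existence of at least one pivotal agent. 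The task thus becomes the design of $\theta_{-i}$ so that, whatever $\theta_i' \in [0,c]$ is reported, some agent is pivotal.

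I would first carry this out for $n = 3$ by taking $c = 10$, $c_1 = 2$, $c_2 = c_3 = 4$, $i = 1$, and setting $\theta_{-1} = (\theta_2, \theta_3) := (0.5, 7)$. A direct computation yields $\sum_{k \neq 1}(\theta_k - c_k) = -0.5$, so for every $\theta_1' \in [2.5, 10]$ the project is accepted and agent $1$ is strictly pivotal for it, contributing $t_1^{VCG} = -0.5$. For $\theta_1' \in [0, 2.5)$ the project is rejected, yet $\sum_{k \neq 2}(\theta_k - c_k) = \theta_1' + 1 > 0$, so agent $2$ is pivotal against it and $t_2^{VCG} = -(\theta_1'+1)$. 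Either way $T(\theta_1', \theta_{-1}) < 0$, so $S_1^{BCGC}(\theta_{-1}) \leq -0.5 < 0$ and BC strictly dominates VCG.

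To extend this to arbitrary $n \geq 3$, I would pad the instance with $n-3$ ``dummy'' agents having tiny cost shares $c_i = \epsilon$ (setting $c = 10 + (n-3)\epsilon$) and fix their components of $\theta_{-1}$ at $0$. For $\epsilon$ sufficiently small the two pivotality inequalities above are perturbed only by $O((n-3)\epsilon)$, and the same case split still delivers a pivotal agent for every $\theta_1' \in [0,c]$. The main obstacle is the simultaneous requirement that some agent be pivotal at \emph{every} $\theta_1' \in [0,c]$, including the extremes $0$ and $c$; the asymmetry $c_1 < c_2, c_3$ is what makes this possible, since it allows one agent to be pivotal for the project whenever it passes and a different agent to be pivotal against it whenever it does not.
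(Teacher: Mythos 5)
Your proposal is correct and follows essentially the same route as the paper: reduce via Note~\ref{not:bc}(ii) to exhibiting a profile $\theta_{-1}$ with $S_1^{BCGC}(\theta_{-1})<0$, verify this by a case split on whether the project is accepted, and pad with low-cost-share dummy agents for $n>3$. Your pivotality framing is just a repackaging of the paper's direct computation of $L-(R_1+R_2+R_3)$ (which equals $T(\theta)$), and your numerical instance, though different from the paper's, checks out.
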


By Theorem \ref{thm:opt}, the BC mechanism in the proof of
the above theorem is not pay-only. 


\section{Summary}

In this paper, we introduced and studied the following relation on feasible
Groves mechanisms: a feasible Groves mechanism {\em welfare dominates}
another feasible Groves mechanism if the total welfare (with taxes taken
into account) under the former is at least as great as the total welfare
under the latter, for any type vector---and the inequality is strict for at
least one type vector.  This dominance notion is different from the one
proposed in~\cite{GC08a}.
We then studied welfare (un)dominance in two domains.  The first domain we
considered was that of auctions with multiple identical units and unit
demand bidders.  In this domain, we analytically characterized all welfare
undominated Groves mechanisms that are anonymous and have linear payment
functions.
The second domain we considered is that of public
project problems.  In this domain, we showed that the VCG mechanism is welfare
undominated if cost shares are equal, but also that this is not necessarily
true if cost shares are not necessarily equal (though we showed that the VCG
mechanism remains undominated in the weaker sense of~\cite{GC08a} among 
pay-only mechanisms in this more general setting).

\section*{Acknowledgments}

Guo and Conitzer thank the National Science Foundation and the Alfred
P.~Sloan Foundation for support (through award number IIS-0812113 and a
Research Fellowship, respectively). The work of Vangelis Markakis was
funded by the NWO project DIACoDeM, No 642.066.604.

\bibliography{references}
\bibliographystyle{abbrv}


%
%
%


\appendix
\section{Dominance is distinct from welfare dominance}
\label{app:distinct}

In this appendix, we give two tax-based mechanisms $t$ and $t'$ (both
feasible, anonymous Groves mechanisms) such that
$t'$ welfare dominates $t$, but $t'$ does not dominate $t$. 
Consider a single-item auction with $4$ players.  We assume that for each
player, the set of allowed bids is the same, namely, integers from
$0$ to $3$.
Let $t^{VCG}$ be (the tax function of) the VCG mechanism.  For all
$\theta\in \{0,1,2,3\}^4$, $\sum_{i=1}^4t^{VCG}_i(\theta)=-[\theta]_2$.
This is because for a single-item auction, the VCG mechanism is the
second-price auction. 
We define $t$ and $t'$ as follows:
{\bf Function $t$:} 
For all $\theta$, $t_i(\theta) := t^{VCG}_i(\theta)+h(\theta_{-i})$, where
$h(\theta_{-i})=r([\theta_{-i}]_1,[\theta_{-i}]_2,[\theta_{-i}]_3)$, and
the function $r$ is given in the table below. (We recall that
$[\theta_{-i}]_j$ is the $j$th-highest bid among bids other than $i$'s own
bid.)
{\bf Function $t'$:}
For all $\theta$, $t'_i(\theta) := t^{VCG}_i(\theta)+h'(\theta_{-i})$,
where
$h'(\theta_{-i})=r'([\theta_{-i}]_1,[\theta_{-i}]_2,[\theta_{-i}]_3)$, and
the function $r'$ is given in the table below.


\begin{footnotesize}
\noindent
        \begin{tabular}{|c|c|c|c|}
\hline
$\mathbf{r(0,0,0)}$ & $0$ & $\mathbf{r'(0,0,0)}$ & $0$\\
\hline
$\mathbf{r(1,0,0)}$ & $0$ & $\mathbf{r'(1,0,0)}$ & $0$\\
\hline
$\mathbf{r(1,1,0)}$ & $1/4$ & $\mathbf{r'(1,1,0)}$ & $1/4$\\
\hline
$\mathbf{r(1,1,1)}$ & $1/4$ & $\mathbf{r'(1,1,1)}$ & $1/4$\\
\hline
$\mathbf{r(2,0,0)}$ & $0$ & $\mathbf{r'(2,0,0)}$ & $0$\\
\hline
$\mathbf{r(2,1,0)}$ & $1/12$ & $\mathbf{r'(2,1,0)}$ & $7/24$\\
\hline
$\mathbf{r(2,1,1)}$ & $0$ & $\mathbf{r'(2,1,1)}$ & $1/6$\\
\hline
\end{tabular}
        \begin{tabular}{|c|c|c|c|}\hline
                        $\mathbf{r(2,2,0)}$ & $1/2$ & $\mathbf{r'(2,2,0)}$ & $1/2$\\
\hline
$\mathbf{r(2,2,1)}$ & $0$ & $\mathbf{r'(2,2,1)}$ & $1/4$\\
\hline
$\mathbf{r(2,2,2)}$ & $1/2$ & $\mathbf{r'(2,2,2)}$ & $1/2$\\
\hline
$\mathbf{r(3,0,0)}$ & $0$ & $\mathbf{r'(3,0,0)}$ & $0$\\
\hline
$\mathbf{r(3,1,0)}$ & $1/4$ & $\mathbf{r'(3,1,0)}$ & $1/4$\\
\hline
$\mathbf{r(3,1,1)}$ & $0$ & $\mathbf{r'(3,1,1)}$ & $1/4$\\
\hline
$\mathbf{r(3,2,0)}$ & $2/3$ & $\mathbf{r'(3,2,0)}$ & $2/3$\\
\hline
\end{tabular}
        \begin{tabular}{|c|c|c|c|}\hline
                        $\mathbf{r(3,2,1)}$ & $1$ & $\mathbf{r'(3,2,1)}$ & $19/24$\\
\hline
$\mathbf{r(3,2,2)}$ & $0$ & $\mathbf{r'(3,2,2)}$ & $1/6$\\
\hline
$\mathbf{r(3,3,0)}$ & $2/3$ & $\mathbf{r'(3,3,0)}$ & $5/6$\\
\hline
$\mathbf{r(3,3,1)}$ & $0$ & $\mathbf{r'(3,3,1)}$ & $7/12$\\
\hline
$\mathbf{r(3,3,2)}$ & $1$ & $\mathbf{r'(3,3,2)}$ & $5/6$\\
\hline
$\mathbf{r(3,3,3)}$ & $0$ & $\mathbf{r'(3,3,3)}$ & $1/2$\\
\hline
& & & \\
\hline
\end{tabular}
\end{footnotesize}

With the above characterization, $t'$ welfare dominates $t$ (the total tax
under $t'$ is never lower, and in some cases it is strictly higher: for
example, for the bid vector $(3,2,2,2)$, the sum of the $r_i$ is $1/2$, but the
sum of the $r'_i$ is $1$).  On the other hand, $t'$ does not dominate $t$: for
example, $r(3,3,2) = 1 > 5/6 = r'(3,3,2)$.  In fact,  no feasible Groves
mechanism dominates $t$.

\end{document}